\documentclass{amsart}
\usepackage{graphicx} 
\usepackage{hyperref}
\usepackage{amsmath,amsfonts,amssymb,amsthm,mathrsfs,tikz-cd, enumerate,mathtools,amscd,tikz-cd}

\DeclarePairedDelimiterX\set[1]\lbrace\rbrace{#1}

\theoremstyle{plain}
\newtheorem{definition}{Definition}[section]

\newtheorem{theorem}{Theorem}[section]

\newtheorem{lemma}{Lemma}[section]

\newcommand{\ra}{\rightarrow}

\newcommand{\RR}{{\mathbb R}}
\newcommand{\NN}{{\mathbb N}}
\newcommand{\ZZ}{{\mathbb Z}}
\newcommand{\CC}{{\mathbb C}}

\newcommand{\QQ}{{\mathbb Q}}

\newcommand{\CF}{{\mathcal F}}

\newcommand{\CT}{{\mathcal T}}

\newcommand{\A}{{\mathsf M}}

\newcommand{\bphi}{{\boldsymbol{\varphi}}}
\newcommand{\bq}{{\boldsymbol{q}}}

\newcommand{\Mat}{{\rm Mat}}
\newcommand{\End}{{\rm End}}

\newcommand{\mm}{{\mathfrak m}}
\newcommand{\Gal}{{\rm Gal}}

\newcommand{\GL}{{\rm GL}}

\title{On thermalization in many-body classical Floquet systems}
\author{Anton Kapustin}
\address{California Institute of Technology}

\begin{document}

\begin{abstract}
    It is expected that a generic closed many-body system prepared in a well-behaved initial state and subjected to a periodic drive will eventually thermalize, i.e. approach the state of  maximal entropy. This property, while compatible with and even demanded by the physical intuition, is much stronger than ergodicity or mixing and is difficult to justify mathematically. We describe an infinite set of classical many-body Floquet systems of algebraic origin for which thermalization of very general initial states can be proved. For example, we show that a Gibbs state of any sufficiently uniform local differentiable Hamiltonian heats up to infinite temperature at long times. We show that in agreement with the physical intuition, the only obstruction to thermalization is the existence of local observables which are periodic in time. 
\end{abstract}

\maketitle

\section{Introduction}

The belief that a generic state of a generic closed system thermalizes at long times is the basis of thermodynamics but is very difficult to justify mathematically. The main problem is to define what is meant by a ``generic state'' and a ``generic system''. 

A system, whether classical or quantum, is defined by its set of observables and the dynamical law. Most commonly, one considers Hamiltonian dynamics with a time-independent Hamiltonian so that energy is conserved  automatically. Another option is to study Floquet dynamics which only has discrete time-translation symmetry and no conserved quantities, in general. In the case of classical Hamiltonian  systems with a compact phase space, there are negative results showing that a system which is generic in a topological sense is not ergodic for energies near the minimal energy \cite{MarkusMeyer}.\footnote{A Hamiltonian dynamics which is not ergodic has observables other than energy which are integrals of motion and thus is not thermalizing unless one modifies the microcanonical ensemble by taking into account the additional integrals of motion.} The relevance of this result to foundations of thermodynamics is questionable, since thermodynamics concerns itself with large-volume many-body systems  whose energy scales as the volume. For infinite-volume many-body systems,  whether classical or quantum, there are no known obstructions to generic ergodicity.\footnote{In the quantum case, one is naturally led to study infinite-volume many-body systems, since the dynamics of quantum systems with a finite-dimensional Hilbert space is always quasi-periodic.} 

We observe that in most of the physical literature a ``generic dynamical law'' means something rather different from that in \cite{MarkusMeyer}. An obvious obstruction to thermalization in a many-body system is the existence of local observables whose dependence on time is periodic  or periodic modulo some symmetry transformation. One may call such systems ``irregular''. physical intuition suggests that systems which are regular (i.e. not irregular) are thermalizing, and one may regard regularity as a physicist's version of genericity. Whether regular many-body systems are generic in a topological sense is an interesting but logically distinct question.

There is another issue which is much less discussed: the meaning of a ``generic initial state''. Even for classical systems with a compact energy surface, the Poincar\'{e} recurrence theorem shows that one cannot expect thermalization for an arbitrary initial state. For such systems, one typically restricts to states whose probability density is an integrable function (technically, the corresponding measure is absolutely continuous with respect to the Liouville measure). By definition, a mixing dynamical law thermalizes any such state. For many-body systems, the need to restrict the set of initial states is even more acute, since in the infinite-volume limit the set of all states is immense and includes states which cannot be prepared by any conceivable experimental procedure. But in the many-body case it would be wrong to restrict oneself to absolutely continuous states. For example, consider an infinite  system of classical spins evolving under a Hamiltonian dynamics corresponding to a nontrivial finite-range Hamiltonian. A product state with a continuous probability distribution for each spin  is not absolutely continuous with respect to a Gibbs state, yet one expects the former to approach the latter for long times. Similarly, a Gibbs state of an infinite classical spin system with a nontrivial finite-range Hamiltonian is not absolutely continuous with respect to the infinite-temperature state, yet one expects the former to evolve into the latter under a sufficiently generic Floquet dynamics. 

In this paper we describe an infinite set of genuinely many-body classical Floquet dynamical systems of algebraic origin (essentially, classical analogs of quantum spin chains with Clifford dynamics) for which one can verify that the physical intuition is correct. Namely, we prove that for these systems regularity implies thermalization of a large and fairly natural class of initial states. This class includes all Gibbs states of sufficiently uniform differentiable Hamiltonians.

The organization of the paper is as follows. In Section \ref{sec:setup} we describe the class of systems we are interested in. In Section \ref{sec:results} we formulate our results. Sections \ref{sec:proofs1} and \ref{sec:proofs2} contain the proofs. In  Section \ref{sec:discussion} we compare our results with their quantum counterparts \cite{KapRad}.

Some of the proofs in the paper were obtained with the help of ChatGPT 5.4 Pro and Claude Opus 4.6. This work was supported in part by the Simons Investigator Award and the U.S.\ Department of Energy, Office of Science, Office of High Energy Physics, under Award Number DE-SC0011632.

\section{Algebraic Floquet systems}\label{sec:setup}

The systems we are interested in are classical analogs of spin chains. Their phase space has the form
\begin{align}
\A=\prod_{n\in\ZZ} \A_n,
\end{align}
where each $\A_n$ is a $2s$-dimensional torus with the standard symplectic structure. We interpret the index $n$ as labeling the sites of a spatial lattice $\ZZ\subset\RR$. The spatial translation symmetry acts on $\A$ by shifting the index $n$. We parameterize $\A_n$ by the usual $2\pi$-periodic coordinates $\varphi_n^\alpha$, $\alpha\in\{1,\ldots,2s\}$ such that the symplectic form is
$\sum_{a=1}^s d\varphi_n^{2a-1}\wedge d\varphi_n^{2a}.$ 
For any subset $\Lambda\subset\ZZ$ we denote
\begin{align}
    \A_\Lambda=\prod_{n\in\Lambda}\A_n.
\end{align}

$\A$ is infinite-dimensional yet compact when endowed with the product topology. It has a maximum entropy state (i.e. a probability measure)  $\mu_0$ with a constant probability density. This state is known as the Haar state. It is the unique probability measure  invariant under shifts of $\varphi^\alpha_n$, $n\in\ZZ$, $\alpha\in\{1,\ldots,2s\}$, by arbitrary real constants. Equivalently, $\A$ is an abelian group, and $\mu_0$ is invariant under the action of $\A$ on itself. $\mu_0$ is also invariant under the spatial translation $\CT:\varphi^\alpha_n\mapsto\varphi^\alpha_{n-1}$. 

A local observable is a continuous function $f:\A\ra\CC$ which depends only on a finite number of coordinates $\varphi^\alpha_n$. The space of local observables is  dense in $C(\A)$, the space of continuous functions on $\A$. We will sometimes call elements of  $C(\A)$ quasilocal observables. Note that for any two smooth local observables $f,g$ there is a well-defined local observable $\{f,g\}$ (the Poisson bracket of $f$ and $g$). The Poisson bracket makes the commutative algebra of smooth local observables into a Poisson algebra.

We are going to study the dynamics generated by a continuous map $\CF:\A\ra\A$. We impose the following conditions on $\CF$:
\begin{enumerate}
    \item $\CF$ is invertible;
    \item Translational symmetry: $\CF$ commutes with $\CT$;
    \item Locality: if $f\in C(\A)$ depends only on the coordinates on $\A_n$, then $f\circ \CF^{-1}$ depends only on the coordinates on $\A_{[n-D,n+D]}$, where $D\geq 0$ can be chosen the same for all $n$ and $f$;
    \item $\CF$ maps smooth local observables to smooth local observables and preserves the Poisson bracket on the algebra of smooth local observables;
    \item $\CF$ is a homomorphism of abelian groups.
\end{enumerate}
Conditions (1-4) are physically natural. Condition (5) is imposed to make the problem manageable. 

Conditions (2) and (3) imply that $\CF$ is determined by a local update rule: a smooth function $F:\A_{[-D,D]}\ra\A_0$. Condition (5) then implies that $F$ has the form
\begin{align}
    F(\varphi_{-D},\ldots,\varphi_D)^\alpha=\sum_{n=-D}^D \sum_{\beta=1}^{2s} F^\alpha_{\beta,n}\varphi^\beta_n,
\end{align}
where $F^\alpha_{\beta,n}\in\ZZ$. It is convenient to introduce a matrix function
\begin{align}
F(u)^\alpha_\beta=\sum_{n\in\ZZ} F^\alpha_{\beta,n}u^{-n}.
\end{align}
Its matrix elements are Laurent polynomials with integer coefficients. Then $\CF$ can be concisely written as a multiplication of a vector-valued formal power series $\bphi(u)=\sum_{n\in\ZZ} \varphi_n u^n$ by the  matrix-valued Laurent polynomial $F(u)$:
\begin{align}\label{eq:F}
    \CF:\bphi(u)\mapsto F(u)\bphi(u).
\end{align}
In this notation, translation symmetry acts by $\bphi(u)\mapsto u\bphi(u)$ and the fact that it commutes with $\CF$ is obvious. Invertibility of $\CF$ is equivalent to the invertibility of the matrix $F(u)$ as an element of the algebra of $2s\times 2s$ matrices whose matrix elements are Laurent polynomials with integer coefficients.

Dually, one may consider the action of $\CF$ on local observables, and more specifically on characters of $\A$ which form a basis in the space of local observables. Characters are labeled by an infinite-component integer vector with components $q^\alpha_n$ such that only a finite number of components are nonzero. We encode this vector into a Laurent polynomial $\bq(u)^\alpha=\sum_n q^\alpha_n u^n$ whose coefficients are $\ZZ^{2s}$-valued. The character corresponding to $\bq$ is a local observable
\begin{align}
    \chi_\bq=e^{i(\bq,\bphi)},
\end{align}
where $(\bq,\bphi)=\sum_{n,\alpha} \bq_n^\alpha \varphi_n^\alpha$. The symplectomorphism $\CF$ acts on characters as follows:
\begin{equation}
    \chi_\bq\circ \CF^{-1}=\chi_{L\bq},
\end{equation}
where $L(u)=\left(F(u^{-1})^{-1}\right)^T$. Note that $L(u)$ is also an invertible element of the algebra of Laurent polynomials with coefficients in $\Mat(2s,\ZZ)$. 

It is convenient for what follows to introduce the ring $R=\ZZ[u,u^{-1}]$ and a rank-$2s$ free module $R^{2s}$. Then $\bq\in R^{2s}$ and $F,L\in \GL (2s,R)\subset \Mat(2s,R)$. The algebra $\Mat(2s,R)$ is equipped with an involutive anti-automorphism $M(u)\mapsto M(u)^\dagger=M(u^{-1})^T$, in terms of which the relation between $F$ and $L$ can be written as $L=\left(F^\dagger\right)^{-1}$. 

The Poisson bracket of characters is given by
\begin{align}
    \{\chi_\bq,\chi_{\bq'}\}=-(\bq,\Omega\bq')\chi_{\bq+\bq'},
\end{align}
where 
\begin{align}
    \Omega={\mathbf 1}_s\otimes\begin{pmatrix} 0 & 1\\ -1 & 0\end{pmatrix} .
\end{align}
In terms of the matrix-valued function $L$, Condition (4) is equivalent to
\begin{align}\label{eq:pseudoU}
 L^\dagger\Omega L=\Omega.
\end{align}
We will call such $L$ pseudo-unitary. We will see that only invertibility is needed to prove the results below, so from now on we will drop Condition (4) and assume that $L\in \GL (r,R)$ where $r$ is not necessarily even. Pseudo-unitarity is only important for the physical interpretation of our results. 

In summary, we will study Floquet dynamics on $\A$ induced by $L\in \GL(r,R)$. Such dynamical systems bear close similarity to abelian cellular automata as well as to translationally-invariant Clifford Quantum Cellular Automata \cite{CQCA1}. In fact, when $L$ is pseudo-unitary, the corresponding dynamical system can be viewed as a classical limit of a sequence of Clifford QCAs, as discussed in Section \ref{sec:discussion}.

\section{Results}\label{sec:results}

Now we define a class of $\CF$ for which thermalization can be proved. 

\begin{definition}
    Fix a norm $\|\cdot\|$ on $\ZZ^{r}\otimes \RR=\RR^{r}$. For any $\bq\in\ZZ^{r}[u,u^{-1}]$ we let
    \begin{align}
        \|\bq\|_\infty=\sup_{n\in\ZZ} \|\bq_n\|,
    \end{align}
    where $\bq_n$ is the coefficient of $u^n$ in $\bq$. 
\end{definition}

\begin{definition}
    Let $\CF$ be an automorphism of $\A$ specified by a matrix $L\in \GL(r,R)$. We say that $\CF$ has the frequency blowup (FB) property if 
    \begin{align}    \lim_{k\ra\infty}\|L^k\bq\|_\infty=\infty,\quad\forall \bq\neq 0.
    \end{align}
\end{definition}

Next, we define a rather general class of initial states.
For any $\Lambda\subset\ZZ$, let  $\mu_\Lambda(\cdot\vert \bphi_{\Lambda^c})$ be the conditional probability measure $\mu$ on $\A_\Lambda$ induced by $\mu$. In general, it may not exist. If it exists, it is not unique. However, for a fixed $\mu$ any two conditional probability measures regarded as functions on $\A_{\Lambda^c}$ may differ only on a subset of $\mu$-measure zero. We make the following definition.
\begin{definition}
    Let $\mu$ be a probability measure on $\A$ such that for any $n\in\ZZ$ the conditional measure $\mu_{\{n\}}(\varphi_n|\bphi_{\ZZ\backslash\{n\}})$ exists. Define
    \begin{equation}
        \psi_\mu(q)=\sup_{n\in\ZZ}\, {{\rm ess}\sup}_{\bphi_{\ZZ\backslash
        \{n\}}} \left|\int \chi_q(\varphi_n) d\mu_{\{n\}}(\varphi_n|\bphi_{\ZZ\backslash\{n\}})\right|,\quad q\in \ZZ^{r}.
    \end{equation}
    We say that $\mu$ satisfies the Uniform Riemann-Lebesgue (URL) condition if 
    \begin{equation}
        \lim_{q\ra\infty} \psi_\mu(q)=0. 
    \end{equation}
\end{definition}
The meaning of the URL condition is that the single-site conditional probability measures  are sufficiently smooth in a way that is uniform across configurations and sites. 
If the variables on individual sites are independent and identically distributed, then the Riemann-Lebesgue lemma implies that the URL condition is satisfied if the single-site measure is absolutely continuous with respect to the Haar measure on $\A_0$. In general, the URL condition is stronger than the absolute continuity of the single-site conditional measures. As another example, let $\mu$ be a  Gibbs probability measure associated to an energy  function $H(\bphi)$. By definition, this means that for any finite subset $\Lambda\subset\ZZ$, the conditional measure $\mu_\Lambda$ satisfies the Dobrushin-Lanford-Ruelle condition \cite{simongases}:
\begin{equation}
d\mu_\Lambda(\bphi_\Lambda\vert\bphi_{\Lambda^c})=Z_{\Lambda}(\bphi_{\Lambda^c})^{-1} \exp(-\beta H(\bphi_\Lambda,\bphi_{\Lambda^c})) d\mu_0(\bphi_\Lambda).
\end{equation}
Such a measure $\mu$ satisfies the URL condition if
\begin{equation}
\sup_n\sup_{\bphi}\|\partial_{\varphi_n}H\left(\varphi_n,\bphi_{\ZZ\backslash\{n\}}\right)\|<\infty .
\end{equation}

Our first result is
\begin{theorem}\label{thm:FBimpliestherm}
    Let $\CF$ be a symplectomorphism of the form (\ref{eq:F}). If $\CF$ has the FB property and $\mu$ is a probability measure satisfying the URL condition, then for $f\in C(\A)$ we have
    \begin{equation}\label{eq:FBweakconvergence}
       \lim_{n\ra\infty}\mu(f\circ\CF^{-n})=\mu_0(f).
    \end{equation}
\end{theorem}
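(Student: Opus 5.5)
By density of local observables in $C(\A)$, and since characters span the local observables, it suffices to prove (\ref{eq:FBweakconvergence}) for $f=\chi_\bq$ with $\bq\in R^{r}$. The reduction works as follows. Given $f\in C(\A)$ and $\epsilon>0$, pick a trigonometric polynomial $g$ in finitely many characters with $\|f-g\|_{\sup}<\epsilon$. Then $|\mu(f\circ\CF^{-n})-\mu(g\circ\CF^{-n})|<\epsilon$ uniformly in $n$, and likewise $|\mu_0(f)-\mu_0(g)|<\epsilon$.

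For $\bq=0$ both sides equal $1$. For $\bq\neq0$ we have $\mu_0(\chi_\bq)=0$ and $\chi_\bq\circ\CF^{-k}=\chi_{L^k\bq}$. So the whole statement reduces to showing
\begin{equation*}
\lim_{k\ra\infty}\mu(\chi_{L^k\bq})=0,\qquad \bq\neq 0 .
\end{equation*}

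The key estimate is a one-site conditioning bound. Let $\bq'\in R^{r}$ be arbitrary and let $m$ be a site where $\|\bq'_m\|=\|\bq'\|_\infty$. Write $\chi_{\bq'}(\bphi)=e^{i(\bq'_m,\varphi_m)}\,e^{i\sum_{n\neq m}(\bq'_n,\varphi_n)}$. The second factor is a function of $\bphi_{\ZZ\backslash\{m\}}$ of modulus one. Disintegrating $\mu$ along the single-site conditional measure $\mu_{\{m\}}$, which exists by assumption, gives
\begin{equation*}
\left|\mu(\chi_{\bq'})\right|\le \int \left|\int \chi_{\bq'_m}(\varphi_m)\, d\mu_{\{m\}}(\varphi_m\vert\bphi_{\ZZ\backslash\{m\}})\right| d\mu \le \psi_\mu(\bq'_m).
\end{equation*}
Because of the essential supremum in the definition of $\psi_\mu$, this bound holds up to $\mu$-null sets, and these do not affect the integral.

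Now apply this with $\bq'=L^k\bq$. By the FB property, $\|L^k\bq\|_\infty\ra\infty$, so the maximal coefficient $(L^k\bq)_{m_k}$ tends to infinity in $\ZZ^{r}$. The URL condition then gives $\psi_\mu((L^k\bq)_{m_k})\ra0$. Hence $\mu(\chi_{L^k\bq})\ra0$, and combining this with the density argument finishes the proof.

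I expect the only delicate point to be the measure-theoretic justification of the disintegration step. One must check that $\mu(\chi_{\bq'})=\int\big(\int\chi\,d\mu_{\{m\}}\big)d\mu$ holds for the given version of the conditional measure. The bound by the essential supremum also has to be uniform in the site $m$, which is exactly why $\psi_\mu$ takes a supremum over $n$; this matters because the maximizing site $m_k$ can drift with $k$. Everything else is routine.
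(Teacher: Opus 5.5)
Your proposal is correct and follows the paper's proof essentially step for step. You reduce to nontrivial characters by Stone--Weierstrass density and the uniform bound $|\mu(f\circ\CF^{-k})|\le\|f\|_\infty$, condition on the single site $n_k$ where $L^k\bq$ has its maximal coefficient to get $|\mu(\chi_{L^k\bq})|\le\psi_\mu\bigl((L^k\bq)_{n_k}\bigr)$, and conclude from FB together with URL, with the supremum over sites in $\psi_\mu$ handling the drifting maximizing site.
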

There is an easily checked sufficient condition for the FB property.
\begin{definition}
    $\CF$ is locally hyperbolic if there exists  $u_0\in S^1$ such that all the eigenvalues of $L(u_0)$ are not on the unit circle. 
\end{definition}
We prove
\begin{theorem}\label{thm:locHimpliesFB}
    If $\CF$ is locally hyperbolic, then it has the FB property.
\end{theorem}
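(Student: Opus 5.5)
The plan is to pass from the sup-norm of coefficients to the value of a Laurent polynomial at a point of the unit circle, where the dynamics becomes multiplication by the finite matrix $L(u)$, and then to use hyperbolicity to get exponential growth that beats the polynomial loss coming from the growth of supports.

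\textbf{Step 1 (supports and evaluation).} Let $D'$ be such that all entries of $L$ are supported in degrees $[-D',D']$. If $\bq\neq0$ is supported in an interval of length $W$, then $L^k\bq$ is supported in an interval of length at most $W+2D'k$. For any $p\in R^r$ and any $|u|=1$ we have $\|p(u)\|\le \sum_n\|p_n\|$. Fixing a norm on $\CC^r$ compatible up to constants with $\|\cdot\|$, this gives
\begin{align}
\|L(u)^k\bq(u)\| = \|(L^k\bq)(u)\|\le C\,(W+2D'k+1)\,\|L^k\bq\|_\infty .
\end{align}
It therefore suffices to find one $u_1\in S^1$ with $\|L(u_1)^k\bq(u_1)\|\ge c\rho^k$ for some $\rho>1$ and $c>0$.

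\textbf{Step 2 (openness of hyperbolicity).} The eigenvalues of $L(u)$ depend continuously on $u$. Hence there is an open arc $I\ni u_0$ of $S^1$ on which no eigenvalue of $L(u)$ lies on the unit circle. On $I$ let $P(u)$ be the Riesz spectral projection onto the unstable subspace, meaning the sum of generalized eigenspaces with $|\lambda|>1$. If $P(u_1)\bq(u_1)\neq0$ for some $u_1\in I$, then
\begin{align}
\|L(u_1)^k\bq(u_1)\|\ge c\rho^k
\end{align}
for all $k$, with $\rho>1$ smaller than the minimal unstable modulus. Here one uses that $L(u_1)$ restricted to the unstable subspace has an inverse with spectral radius $<1$. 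Together with Step 1 this gives $\|L^k\bq\|_\infty\to\infty$.

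\textbf{Step 3 (main obstacle: $\bq(u)$ cannot be stable along the whole arc).} Suppose, for contradiction, that $P(u)\bq(u)=0$ for all $u\in I$. Let $K=\QQ(u)$ and let $V\subset K^r$ be the cyclic $L$-invariant subspace spanned by $\bq,L\bq,L^2\bq,\ldots$. Let $d=\dim V$ and let $\chi_V(\lambda)\in K[\lambda]$ be the characteristic polynomial of $L|_V$. It divides the characteristic polynomial of $L$, which is monic with coefficients in $R$ and has constant term $\pm\det L$, a unit $\pm u^m$ of $R$.

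The roots of $\chi_V$ are integral over $R$, so its coefficients are as well. Since $R$ is a UFD and hence integrally closed, $\chi_V\in R[\lambda]$. Its constant term divides a unit of $R[\lambda]$, so it equals $\pm u^j$.

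Now pick $u_1\in I$ outside the finitely many points where the $d\times d$ minors witnessing independence of $\bq,\ldots,L^{d-1}\bq$ all vanish, and where the denominators in the relation $L^d\bq=\sum_{i<d} a_iL^i\bq$ vanish. At such a point, $V(u_1)$ is a $d$-dimensional $L(u_1)$-invariant subspace containing $\bq(u_1)$, and $L(u_1)|_{V(u_1)}$ has characteristic polynomial $\chi_V(u_1,\lambda)$. Since $V(u_1)$ is generated by $\bq(u_1)$, which lies in the stable subspace, all its eigenvalues satisfy $|\lambda|<1$. Hence $|\chi_V(u_1,0)|<1$, contradicting $|\pm u_1^j|=1$.

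The delicate points I expect here are the specialization argument, i.e.\ that evaluation at generic $u_1$ commutes with forming $V$ and $\chi_V$, and the Gauss-lemma/integrality step. The rest is routine.
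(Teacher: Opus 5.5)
Your proof is correct, and it takes a genuinely different route from the paper's. The paper first replaces $u_0$ by a nearby root of unity $\zeta$, so that the eigenvalues of $L(\zeta)$ are nonzero algebraic integers. When the eigenvalue $\lambda$ seen by $\bq$ is contracting, it uses that the norm of $\lambda$ is a nonzero rational integer to find a conjugate $\lambda'$ with $|\lambda'|>1$. It then applies $\sigma\in\mathrm{Aut}(\overline{\QQ}/\QQ)$ with $\sigma(\lambda)=\lambda'$, which moves the evaluation point to $\zeta'=\sigma(\zeta)$. This point stays on $S^1$ only because $\zeta$ is a root of unity, and it is otherwise uncontrolled; since $\bq$ may vanish there, the paper has to track the lowest nonvanishing Taylor coefficient (the $t^\ell$-jet) of $\bq$ rather than its value. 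You never leave the original hyperbolic arc. Choosing a generic $u_1$ there makes jets unnecessary. You also replace ``a nonzero algebraic integer has a conjugate of modulus $>1$ if it has one of modulus $<1$'' by a relative statement over $K=\QQ(u)$: the characteristic polynomial $\chi_V$ of $L$ on the cyclic subspace generated by $\bq$ lies in $R[\lambda]$ (integral closedness of $R$) and has constant term a unit $\pm u^j$. Hence $|\det(L(u_1)|_{V(u_1)})|=1$, and $V(u_1)$ cannot sit inside the stable subspace. This avoids Galois theory over $\QQ$ and the root-of-unity approximation. It shows that $\bq(u_1)$ has nonzero unstable component at all but finitely many points of any hyperbolic arc, and it fits naturally with the specialization arguments over $K$ in Section~\ref{sec:proofs2}. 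The price is that you genuinely need $\det L$ to be a unit of $R$: for $r=1$, $L=1+u$ near $u_0=-1$ everything is contracting on that arc, yet FB holds. The paper's argument only uses $\det L(\zeta)\neq 0$ and handles such cases. Two small presentational points. The divisibility $\chi_V(0)\mid\pm\det L$ should be taken in $R$, using that the monic cofactor $\chi_L/\chi_V$ also lies in $R[\lambda]$. And since $\bq$ is a cyclic vector for $V$, the relation $L^d\bq=\sum_{i<d}a_iL^i\bq$ is just $\chi_V(L)\bq=0$, so the $a_i$ already lie in $R$ and there are no denominators to avoid.
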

Thus every locally hyperbolic $\CF$ thermalizes any state satisfying the URL condition, and in particular every Gibbs state of a sufficiently uniform differentiable Hamiltonian.

physical intuition suggests that thermalization should apply if there are no quasilocal observables which behave in a periodic manner. To make this precise, we make the following definition.
\begin{definition}
    $\CF$ is irregular iff there exists a nonconstant quasilocal observable $f$ and integers $k,\ell$, $k\neq 0$, such that $f\circ\CF^{-k}=f\circ\CT^{-\ell}$. $\CF$ is regular iff it is not irregular. 
\end{definition}
For an irregular $\CF$, there are nontrivial quasilocal observables which are periodic up to spatial translations. If the initial state is chosen to be translationally invariant, the expectation value of some observables is periodic in time, hence such states do not thermalize. Contrariwise, physical intuition suggests that a regular $\CF$ will thermalize any sufficiently nice initial state.

It easily follows from the results of Rokhlin \cite{Rokhlin1949,Schmidtdynamical} that if $\CF$ is regular, then the corresponding dynamics is ergodic and mixing with respect to the Haar state. However, thermalization is a stronger property than mixing. We prove:

\begin{theorem}\label{thm:no-resonance}
    $\CF$ has the FB property if and only if it is regular.
\end{theorem}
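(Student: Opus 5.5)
The proof splits into two directions. The direction FB $\Rightarrow$ regular is a short Fourier argument. The converse is the hard part, and I would attack it by passing to commutative algebra over $\ZZ[t^{\pm1},u^{\pm1}]$.

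\emph{FB $\Rightarrow$ regular.} Argue by contraposition. Suppose $f$ is a nonconstant quasilocal observable with $f\circ\CF^{-k}=f\circ\CT^{-\ell}$, $k\neq0$; replacing $k,\ell$ by $-k,-\ell$ if needed, assume $k>0$.
\begin{enumerate}
\item Expand $f\in C(\A)\subset L^2(\A,\mu_0)$ in characters: $f=\sum_\bq \hat f(\bq)\chi_\bq$. Since $\chi_\bq\circ\CF^{-k}=\chi_{L^k\bq}$ and $\chi_\bq\circ\CT^{-\ell}=\chi_{u^{\ell}\bq}$, the relation says that $\hat f$ is invariant under the invertible map $A:\bq\mapsto u^{-\ell}L^{k}\bq$.
\item As $f$ is nonconstant, some $\bq\neq0$ has $\hat f(\bq)\neq0$. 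Then $|\hat f|$ is constant and nonzero on the $A$-orbit of $\bq$. Square-summability of $\hat f$ forces this orbit to be finite, so $A^m\bq=\bq$ for some $m\geq1$.
\item This gives $L^{km}\bq=u^{\ell m}\bq$. Hence $\|L^{kmj}\bq\|_\infty=\|\bq\|_\infty$ for all $j$, since multiplication by $u$ only shifts coefficients. So $\|L^{n}\bq\|_\infty$ does not tend to infinity, and FB fails.
\end{enumerate}

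\emph{Regular $\Rightarrow$ FB.} Again argue by contraposition. Assume there are $\bq\neq0$, a constant $C$ and a sequence $k_j\to\infty$ with $\|L^{k_j}\bq\|_\infty\le C$. It suffices to find $\bq'\neq0$ and $k\neq0$, $\ell$ with $L^k\bq'=u^\ell\bq'$, because then the single character $f=\chi_{\bq'}$ witnesses irregularity.
\begin{enumerate}
\item Regard $R^r$ as a module over $A=\ZZ[t^{\pm1},u^{\pm1}]$, with $t$ acting by $L$, and let $N=A\bq$. This is a finitely generated $A$-module, since it lies in $R^r$ and $A$ is Noetherian.
\item Choose an associated prime $\mathfrak p$ of $N$, so that some $\bq''\in N$ has $A\bq''\cong A/\mathfrak p$. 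I would try to transfer the bounded-orbit property from $\bq$ to such a $\bq''$, or to a suitable quotient. For this I would use that $L$ has finite range $D$: multiplying by fixed elements of $A$ changes $\|\cdot\|_\infty$ by at most a bounded factor, so bounded-coefficient orbits persist through $A$-linear combinations.
\item In the domain $A/\mathfrak p$, the aim is a Kronecker-type statement in the spirit of Schmidt's treatment of mixing for algebraic $\ZZ^2$-actions: if $t^{k_j}$ applied to a nonzero element stays coefficient-bounded along a sequence $k_j\to\infty$, then $t^k u^{-\ell}-1\in\mathfrak p$ for some $k\neq0$.
\item To prove that statement, the first thing I would try is to embed $A/\mathfrak p$, or its fraction field, into a finite extension of $\QQ(u)$ and look at the eigenvalue $\lambda(u)$ of $L$ that it carries. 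On $u\in S^1$ and at every non-archimedean place, the boundedness, combined with the integrality of all coefficients, should force $|\lambda|=1$ at every place. Then a Kronecker argument, namely that an element all of whose absolute values equal $1$ is a root of unity times a monomial in $u$, yields $\lambda^k=u^\ell$.
\item Given $t^ku^{-\ell}-1\in\mathfrak p$, the element $\bq''$ itself satisfies $L^k\bq''=u^\ell\bq''$, and we take $\bq'=\bq''$.
\end{enumerate}

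I expect the main obstacle to be the valuation/Kronecker step (items 3 and 4). The difficulty is that a bound on $\|\cdot\|_\infty$ does not control the $L^2(S^1)$ norm, because the supports of $L^{k_j}\bq$ may grow linearly in $k_j$. One therefore has to extract information about $|\lambda(u)|$ on $S^1$ from the growth rate of integer coefficients, for instance through a Mahler-measure or entropy comparison, rather than from a direct Fourier bound.
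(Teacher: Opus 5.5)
Your direction ``FB $\Rightarrow$ regular'' is correct and is the paper's argument: the orbit/Parseval argument of Lemma~\ref{lem:algebraicregularity}, followed by the observation that $\|L^{n}\bq\|_\infty$ is then periodic. In the converse, steps 1, 2 and 5 are also sound and parallel the paper's primary projections. Indeed $\bq''=P(L,u)\bq$ with $P(L,u)\in\Mat(r,R)$ fixed (recall $L^{-1}\in\Mat(r,R)$), so Lemma~\ref{lem:norm} transfers the bound along $k_j$. Moreover $R^r$ is $R$-torsion-free and $\chi(u,t)=\det(tI-L)$ annihilates $N$ by Cayley--Hamilton. From this one checks that $\mathfrak p=(\chi_i)$ for a monic irreducible factor $\chi_i$ of $\chi$, so $\chi_i(u,L)\bq''=0$. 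The gap is that steps 3--4 carry the whole content of the theorem, and you only announce them (``should force'', ``a Kronecker argument''). The route you sketch for them would not work as stated.

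Concretely, three things are missing or wrong.

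First, the obstacle you identify is not real. For $w\in R^r$ supported in an interval of length $S_0$, $\|(L^{k}w)(e^{i\theta})\|\le C\,(S_0+2Dk+1)\|L^{k}w\|_\infty$. So along $k_j$ the quantity $\|L(e^{i\theta})^{k_j}w(e^{i\theta})\|$ grows at most linearly. By Lemma~\ref{lem:jordan-growth}, this rules out any nonzero component of $w(e^{i\theta})$ in a generalized eigenspace with $|\lambda|>1$. The information on the unit circle suffices for this step.

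What you must still show is that $w(e^{i\theta})$ has a nonzero component along \emph{every} root $\lambda_j(\theta)$ of $\chi_i(e^{i\theta},t)$. The paper does this as follows: it builds the spectral idempotent $\epsilon(L)$ over a splitting field, shows that it kills $w$ on an arc where some root has $|\lambda(\theta)|\ge1+\varepsilon$, deduces that it kills $w$ identically, and then Galois-conjugates. With your $\mathfrak p=(\chi_i)$ there is a shortcut. The vectors $w,Lw,\dots,L^{d-1}w$ are $K$-independent, so some $d\times d$ minor is a nonzero Laurent polynomial. Off its finitely many zeros, the minimal polynomial of $w(e^{i\theta})$ under $L(e^{i\theta})$ is exactly $\chi_i(e^{i\theta},t)$, which forces a nonzero component along every root. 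Either way you get $|\lambda_j(e^{i\theta})|\le 1$ for all $j$ and almost all $\theta$, i.e.\ $\mm(\chi_i)=0$ by Jensen's formula.

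Second, the non-archimedean part of step 4 is misformulated. At the places of $\QQ(u)$ over $u=0$ and $u=\infty$ one need not have $|\lambda|=1$: take $L=u$, $\lambda=u$, where FB fails. At every other non-archimedean place, $|\lambda|=1$ holds automatically because $\lambda^{\pm1}$ are integral over $R$ ($\det L$ is a unit). So these places carry no information.

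Third, passing from ``$|\lambda_j|\le1$ on the circle'' to $\lambda^k=u^\ell$ is not a one-variable Kronecker argument. It is Boyd's characterization of integer polynomials with vanishing two-variable Mahler measure, which gives $\chi_i=c\,u^e\Phi_d(u^at^b)$ and hence $\lambda^{|b|d}=u^{m}$. This external theorem is the key input of the paper's proof, and your proposal has no substitute for it.
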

This result, combined with Theorem \ref{thm:FBimpliestherm}, validates the physical intuition about thermalization in the special  case of algebraic Floquet systems.

\section{The frequency blowup property and thermalization}\label{sec:proofs1}

In this section we prove Theorems  \ref{thm:FBimpliestherm} and \ref{thm:locHimpliesFB}. This requires little more than basic analysis and algebra. The proof of Theorems \ref{thm:no-resonance} is more involved and is deferred to the next section.

\begin{proof}[Proof of Theorem \ref{thm:FBimpliestherm}]

For any $\Lambda\subset\ZZ$ we can write $\A=\A_\Lambda\times\A_{\Lambda^c}$. We denote by $\mu_\Lambda$ the marginal probability density on $\A_\Lambda$ corresponding to $\mu$. The group of characters of $\A$ factorizes as a product of the group of characters of $\A_\Lambda$ and the group of characters of $\A_{\Lambda^c}$. For any $\bq\in\ZZ^{r}[u,u^{-1}]$ we denote by $\bq_\Lambda$ its component in the group of characters of $\A_\Lambda$. Clearly, for any $\Lambda$ and $\bq$ we have $\chi_\bq=\chi_{\bq_\Lambda}\chi_{\bq_{\Lambda^c}}$. 

Let $f=\chi_\bq$ be a nontrivial character of $\A$. We have
\begin{align}
    \mu(f\circ\CF^{-k})=\int_\A \chi_{L^k\bq} d\mu.
\end{align}
For any $k$ there exists $n_k$ such that $\|(L^k\bq)_{n_k}\|=\|L^k\bq\|_\infty$. 
Conditioning on the complement of $\{n_k\}$, we get
\begin{align}
    \mu(f\circ\CF^{-k})=\int_{\A_{ \{n_k\}^c}} \chi_{(L^k\bq)_{\{n_k\}^c}}\left[\int_{\A_{\{n_k\}}}\chi_{(L^k\bq)_{\{n_k\}}} d\mu(\varphi_{n_k}|\bphi_{\{n_k\}^c})\right] d\mu_{\{n_k\}^c}.
\end{align}
Therefore
\begin{align}
    \left|\mu(f\circ\CF^{-k})\right|\leq \left|\int_{\A_{\{n_k\}}}\chi_{(L^k\bq)_{\{n_k\}}} d\mu(\varphi_{n_k}|\bphi_{\{n_k\}^c})\right|\leq\psi_\mu((L^k\bq)_{\{n_k\}}).
\end{align}
Now let us consider the limit $k\ra\infty$. By our choice of $n_k$ and the FB condition, we have
\begin{align}
    \limsup_{k\ra\infty} \left|\mu(f\circ\CF^{-k})\right|\leq\limsup_{q\ra\infty}\psi_\mu(q).
\end{align}
 By the URL condition, the expression on the right-hand side is zero. Thus
\begin{align}
    \lim_{k\ra\infty}\mu(f\circ\CF^{-k})=0.
\end{align}
Since by the Stone-Weierstrass theorem, characters are dense in $C(\A)$, and since $|\mu(f\circ\CF^{-k})|\leq \|f\|_\infty$ for any $f\in C(\A)$,  (\ref{eq:FBweakconvergence}) holds for any $f\in C(\A)$.
    
\end{proof}

Now we prove Theorem \ref{thm:locHimpliesFB}.

\begin{proof}[Proof of Theorem \ref{thm:locHimpliesFB}]

Since roots of unity are dense on the unit circle and $L(u)$ is a continuous function of $u$ for $u\neq 0$, we may assume that $u_0$ is a root of unity $\zeta\in\CC$. We denote by $\ZZ[\zeta]\subset\CC$ the ring of polynomials in $\zeta$ with integer coefficients.

Let $\overline{\QQ}\subset \CC$ be the algebraic closure of $\QQ$. It is a subfield of $\CC$ consisting of all roots of all monic polynomials with coefficients in $\QQ$. Note that $\zeta\in\overline{\QQ}$. By assumption, the matrix 
\begin{align}
A:=L(\zeta)\in \GL\left(r,\ZZ[\zeta]\right)\subset \GL(r,\CC)
\end{align}
has no eigenvalue on the unit circle, i.e.\ every eigenvalue $\lambda$ of $A$ satisfies $|\lambda|\neq 1$.

Fix $0\neq \bq\in R^r$. Put $t:=u-\zeta$ and 
expand $\bq(\zeta+t)$ as a power series in $t$ with coefficients in $\ZZ[\zeta]\subset\overline{\QQ}$. Thus, we regard $\bq(\zeta+t)$ as living in $\ZZ[\zeta][[t]]^r$, the rank-$r$ free module over the ring of formal power series in $t$ with coefficients in $\ZZ[\zeta]$.
Let $\ell\ge 0$ be smallest integer such that the coefficient of $t^\ell$ is nonzero, thus
\begin{equation}\label{eq:q-jet-zeta}
\bq(\zeta+t)=t^\ell v + t^{\ell+1}(\cdots),\qquad 0\neq v\in \ZZ[\zeta]^r.
\end{equation}
Similarly, $L(\zeta+t)\in \Mat(r,\ZZ[\zeta][[t]])$ and
\begin{align}
L(\zeta+t)=A+tB(t)
\quad\text{for some }B(t)\in \Mat(r,\ZZ[\zeta][[t]]).
\end{align}
Hence $L(\zeta+t)\equiv A \pmod t$, so $L(\zeta+t)^k\equiv A^k \pmod t$.
Multiplying \eqref{eq:q-jet-zeta} by $L(\zeta+t)^k$ and taking the $t^\ell$--coefficient gives
\begin{equation}\label{eq:jet-evolves-zeta}
[t^\ell]\left(L(\zeta+t)^k\,\bq(\zeta+t)\right)=A^k v.
\end{equation}

Let $\chi_A(T):=\det(T I-A)\in \ZZ[\zeta][T]\subset \overline{\mathbb Q}[T]$ be the characteristic polynomial of $A$.
Since $\chi_A$ is monic and its coefficients are algebraic integers, every eigenvalue $\lambda$ of $A$ is an algebraic integer
(i.e. a root of a monic polynomial with integer coefficients). Since $\det A\neq 0$, all eigenvalues are nonzero.

Choose an eigenvalue $\lambda$ such that the component of $v$ in the generalized eigenspace for $\lambda$ is nonzero. If $e\ge 1$ is the algebraic multiplicity of $\lambda$ in $\chi_A$, set
\begin{align}
g(T):=\frac{\chi_A(T)}{(T-\lambda)^e}\in \overline{\mathbb Q}[T].
\end{align}
In Jordan form, $g(A)$ annihilates all generalized eigenspaces $V_\mu(A)$ with $\mu\neq\lambda$ and is invertible on $V_\lambda(A)$,
hence
\begin{equation}\label{eq:w-def}
w:=g(A)v \neq 0
\quad\text{and}\quad
(A-\lambda I)^m w=0\text{ for some }m\ge 1,
\end{equation}
so $w$ lies in the generalized eigenspace of $A$ for $\lambda$.
Note $w\in \overline{\mathbb Q}^r$ because $A,v$ have entries in $\overline{\mathbb Q}$ and $g$ has coefficients in $\overline{\mathbb Q}$.

\smallskip
If $|\lambda|>1$, set $\lambda':=\lambda$.
If $|\lambda|<1$, let $m_\lambda(x)=x^d+a_{d-1}x^{d-1}+\cdots+a_0\in\ZZ[x]$ be the minimal polynomial of $\lambda$ over $\mathbb Q$.
Since $\lambda$ is an algebraic integer and $\lambda\neq 0$, we have $a_0\in\ZZ\setminus\{0\}$ and
\begin{align}
|a_0|=\prod_{\rho}|\rho(\lambda)|,
\end{align}
where $\rho$ runs over the $d$ complex roots (conjugates) of $m_\lambda$.
Because one factor is $|\lambda|<1$ and the product is an integer $\ge 1$, there exists a conjugate $\lambda'$ of $\lambda$
with $|\lambda'|>1$.

Now use the standard fact (see e.g. \cite{LangAlgebra}): for any conjugate $\lambda'$ of $\lambda$ there exists an automorphism
$\sigma\in \mathrm{Aut}(\overline{\mathbb Q}/\mathbb Q)$ such that
\begin{equation}\label{eq:sigma-lambda}
\sigma(\lambda)=\lambda'.
\end{equation}
Define
\begin{align}
\zeta':=\sigma(\zeta).
\qquad
A':=\sigma(A).
\end{align}
Since $\sigma$ is an automorphism, $\zeta'$ is still a root of unity. 
Because $A=L(\zeta)$ and $L$ has integer Laurent coefficients (fixed by $\sigma$), applying $\sigma$ entrywise gives
\begin{equation}\label{eq:A0-is-eval}
A'=\sigma(L(\zeta))=L(\sigma(\zeta))=L(\zeta').
\end{equation}
Also set $v':=\sigma(v)$ and $w':=\sigma(w)\neq 0$.
Applying $\sigma$ to \eqref{eq:w-def} yields
\begin{align}
(A'-\lambda' I)^m w' = 0,
\end{align}
so $w'$ lies in the generalized eigenspace of $A'$ for eigenvalue $\lambda'$.

Since $|\lambda'|>1$ and $w'\neq 0$ lies in that generalized eigenspace, there exist constants $c>0$ and an integer $s\ge 0$ (can be nonzero if the relevant Jordan block has size larger than $1$) such that
\begin{equation}\label{eq:exp-growth-w0}
\|(A')^k w'\|\ \ge\ c\,k^s|\lambda'|^k
\qquad\text{for all $k$ sufficiently large.}
\end{equation}
Moreover, $w'=\sigma(g(A))\,v'$ and $\sigma(g(A))$ is a fixed matrix, so for a constant $C_g:=\|\sigma(g(A))\|$ we have
\begin{equation}\label{eq:exp-growth-vprime}
\|(A')^k v'\|\ \ge\ \frac{1}{C_g}\,\|(A')^k w'\|
\ \ge\ \frac{c}{C_g}\,k^s|\lambda'|^k
\qquad\text{for all $k$ sufficiently large.}
\end{equation}

Apply $\sigma$ to \eqref{eq:q-jet-zeta} (it fixes $t$ and acts on coefficients) to get
\begin{align}
\bq(\zeta'+t)=t^\ell v' + t^{\ell+1}(\cdots).
\end{align}
As before, $L(\zeta'+t)\equiv A'\pmod t$, hence taking the $t^\ell$--coefficient gives
\begin{equation}\label{eq:jet-evolves-u0}
[t^\ell]\left(L(\zeta'+t)^k\,\bq(\zeta'+t)\right)=(A')^k v'.
\end{equation}

Define $\bq^{(k)}(u)=L(u)^k \bq(u)$. Then
\begin{equation}\label{eq:jet-lower}
\left\|[t^\ell](\bq^{(k)}(\zeta'+t))\right\|=\|(A')^k v'\|
\ \ge\ c_1\, k^s \Lambda^k,
\quad \forall k\gg 1,
\end{equation}
with $\Lambda:=|\lambda'|>1$ and $c_1:=c/C_g>0$.

Write $\bq^{(k)}(u)=\sum_j \bq^{(k)}_j u^j$. Taylor-expanding $(\zeta'+t)^j$ gives
\begin{align}
(\zeta'+t)^j=\sum_{m\ge 0}\binom{j}{m}(\zeta')^{\,j-m}t^m,
\end{align}
where $\binom{j}{m}$ is the usual (generalized) binomial coefficient, valid for all $j\in\ZZ$.
Taking the $t^\ell$-coefficient gives 
\begin{align}
[t^\ell]\left(\bq^{(k)}(\zeta'+t)\right)=\sum_{j} \bq^{(k)}_j\,\binom{j}{\ell}\,(\zeta')^{\,j-\ell}.
\end{align}
Since $|\zeta'|=1$, we have
\begin{equation}\label{eq:jet-upper-raw}
\left\|[t^\ell](\bq^{(k)}(\zeta'+t))\right\|
\ \le\ \sum_j \|\bq^{(k)}_j\|\,\left|\binom{j}{\ell}\right|
\ \le\ \|\bq^{(k)}\|_\infty\sum_{j\in {\rm supp}\, \bq^{(k)}} \left|\binom{j}{\ell}\right|,
\end{equation}
where ${\rm supp}\,\bq^{(k)}\subset\ZZ$ is the set of degrees of monomials of the Laurent polynomial $\bq^{(k)}(u)$.

Let $D\ge 0$ be such that the degrees of the matrix Laurent polynomial $L(u)$ are contained in $[-D,D]$.
If $\bq(u)$ is supported in $[a,b]\subset\ZZ$, then $\bq^{(k)}(u)=L(u)^k \bq(u)$ is supported in
$[a-kD,b+kD]$, so the sum in \eqref{eq:jet-upper-raw} runs over at most $O(k)$ indices $j$ with $|j|\leq C' k$, where $C'$ is a positive constant which depends on $\bq$ and $D$. Moreover, for fixed $\ell$ there is a constant $C_\ell>0$ such that for all $j\in\ZZ$,
$|\binom{j}{\ell}|\le C_\ell (1+|j|)^\ell$.
Hence, there is a constant $C'_\ell>0$ (depending on $\ell$, $D$, and $\bq$) such that
\begin{equation}\label{eq:binom-sum}
\sum_{j\in{\rm supp}\, \bq^{(k)}} \left|\binom{j}{\ell}\right|\ \le\ C'_\ell\, (1+k)^{\ell+1}\qquad \forall k\ge 1.
\end{equation}
Combining \eqref{eq:jet-upper-raw} and \eqref{eq:binom-sum} gives
\begin{equation}\label{eq:exp-upper}
\|[t^\ell](\bq^{(k)}(\zeta'+t))\|\ \le\ C'_\ell\,(1+k)^{\ell+1}\,\|\bq^{(k)}\|_\infty,\qquad \forall k\geq 1.
\end{equation}

Combining (\ref{eq:jet-lower}) and (\ref{eq:exp-upper}) we get
\begin{align}
\|\bq^{(k)}\|_\infty\ \ge\ \frac{c_1}{C'_\ell}\,\frac{k^s\Lambda^k}{(1+k)^{\ell+1}} \qquad (k\gg 1).
\end{align}
Since $\Lambda>1$, this proves
$\|\bq^{(k)}\|_\infty\to\infty$ as $k\to\infty$.
\end{proof}

\section{Regularity and the frequency blowup property}\label{sec:proofs2}

In this section we prove Theorem \ref{thm:no-resonance}. We start with the following algebraic characterization of regularity.
\begin{lemma}\label{lem:algebraicregularity}
    $\CF$ is regular if and only if for all $m\in\ZZ$ and all $k\in\NN$ the equation  $L^k\bq=u^m\bq$ for $\bq\in R^{r}$ has no nonzero solutions. 
\end{lemma}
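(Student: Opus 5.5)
We argue both directions through the action of $\CF$ and $\CT$ on characters. Since $\chi_\bq\circ\CF^{-1}=\chi_{L\bq}$, we get $\chi_\bq\circ\CF^{-k}=\chi_{L^k\bq}$ for all $k\in\ZZ$. Similarly, since $\CT$ acts by $\bphi(u)\mapsto u\bphi(u)$, we have $\chi_\bq\circ\CT^{-\ell}=\chi_{u^{\epsilon\ell}\bq}$ for a fixed sign $\epsilon=\pm1$. This sign is fixed once and for all by checking the pairing $(\bq,\bphi)$ against the shift $\varphi_n\mapsto\varphi_{n-1}$, and it plays no further role. Note that $L$ and multiplication by $u$ commute, since $u$ is a scalar in $R$.

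\emph{Solution implies irregularity.} Suppose $L^k\bq=u^m\bq$ with $k\ge1$ and $\bq\neq0$. Take $f=\chi_\bq$, which is a nonconstant local observable because $\bq\ne0$. Choose $\ell=\epsilon m$. Then
\begin{equation*}
f\circ\CF^{-k}=\chi_{L^k\bq}=\chi_{u^m\bq}=f\circ\CT^{-\ell},
\end{equation*}
so $\CF$ is irregular.

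\emph{Irregularity implies a solution.} Let $f\in C(\A)$ be nonconstant with $f\circ\CF^{-k}=f\circ\CT^{-\ell}$ and $k\neq0$. If $k<0$, compose both sides with $\CF^{k}\CT^{\ell}$; this replaces $(k,\ell)$ by $(-k,-\ell)$, so we may assume $k\ge1$. Since $f\in C(\A)\subset L^2(\A,\mu_0)$ and the characters form an orthonormal basis of $L^2(\A,\mu_0)$, we can expand $f=\sum_\bq c_\bq\chi_\bq$ with $\sum_\bq|c_\bq|^2<\infty$.

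Both $\CF$ and $\CT$ are $\mu_0$-preserving group automorphisms, so composition with them permutes the characters and acts unitarily on $L^2$. Hence
\begin{equation*}
\sum_\bq c_\bq\chi_{L^k\bq}=\sum_\bq c_\bq\chi_{u^{\epsilon\ell}\bq}.
\end{equation*}
Set $P:=u^{-\epsilon\ell}L^k\in\GL(r,R)$. By uniqueness of Fourier coefficients, $c_{P\bq}=c_{\bq}$ for all $\bq$, so $c$ is constant on $P$-orbits in $R^r$.

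Because $f$ is nonconstant, some $\bq\neq0$ has $c_\bq\neq0$. Square-summability then forces the $P$-orbit of $\bq$ to be finite, since otherwise infinitely many coefficients would equal $c_\bq\neq0$. Thus $P^j\bq=\bq$ for some $j\ge1$, which gives
\begin{equation*}
L^{kj}\bq=u^{\epsilon\ell j}\bq,\qquad kj\ge1,\quad \bq\neq0,
\end{equation*}
the required nonzero solution.

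\emph{Points to check.} The step requiring the most care is the last one: we must pass from the identity of continuous functions to an identity of Fourier coefficients, and then use $\ell^2$-summability to get a finite orbit. This works because the characters form an orthonormal basis of $L^2(\mu_0)$ and composition with the automorphisms permutes that basis. The only other point is to track the sign conventions of $\CT$ and of negative $k$, which we handled by passing to inverses.
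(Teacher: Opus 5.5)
Your proof is correct and follows essentially the same route as the paper. The forward direction uses the character $\chi_\bq$. For the converse, you expand $f$ in characters, show that its Fourier coefficients are invariant under $P=u^{-\epsilon\ell}L^k$, and use square-summability to force a finite $P$-orbit, so that $P^j\bq=\bq$. Your explicit treatment of negative $k$ and of the translation sign convention is slightly more careful than the paper's, which is loose about $\CF^{k}$ versus $\CF^{-k}$ and $\CT^{m}$ versus $\CT^{-\ell}$.
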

\begin{proof}
    Suppose there is a nonzero $\bq\in R^{r}$ such that $L^k\bq=u^m\bq$ for some $k,m$, $k\neq 0$. Then $f=\chi_\bq$ is a nonconstant local observable which satisfies $f\circ\CF^{-k}=f\circ\CT^m$, hence $\CF$ is irregular. 

    Conversely, suppose $\CF$ is irregular and $f\in C(\A)$ is nonconstant and solves $f\circ\CF^k=f\circ\CT^m$. Let $c_\bq$ be the Fourier coefficient of $f$ corresponding to $\bq\in R^{r}$. Since $C(\A)\subset L^2(\A)$, Parseval's  theorem gives
    \begin{align}\label{eq:parseval}
        \sum_{\bq\in R^{r}}\left|c_{\bq}\right|^2=\int_\A |f|^2<\infty.
    \end{align}
    On the other hand, the equation for $f$ implies 
    \begin{align}
        c_{u^{-m}L^k\bq}=c_{\bq},\quad\forall \bq\in R^{r}.
    \end{align}
    Pick a nonzero $\bq_0\in R^{r}$ such that $c_{\bq_0}\neq 0$, and consider the orbit of $\bq_0$ under the action of the Laurent polynomial $P=u^{-m}L^k$. The Fourier coefficients are constant along the orbit. If the orbit is infinite, then there is an infinite sequence in $R^{r}$  along which the Fourier coefficients $c_{\bq}$ are equal to $c_{\bq_0}\neq 0$. This contradicts (\ref{eq:parseval}). Hence the orbit of $\bq_0$ must be finite and there exists $\ell>0$ such that $P^\ell\bq_0=\bq_0$, i.e. $L^{k\ell}\bq_0=u^{m\ell}\bq_0$. 
\end{proof}

Recall that we equipped $M=R^r$ with a norm $\|\cdot\|_\infty$. We will repeatedly use the following property of this norm.
\begin{lemma}\label{lem:norm}
    For any $T=\sum_{m\in\ZZ} T_m u^m\in \End_R(M)$, we have 
    \begin{align}
        \|T\bq\|_\infty\leq \|\bq\|_\infty \sum_m\|T_m\|.
    \end{align}
\end{lemma}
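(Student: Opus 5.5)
The plan is to expand $T\bq$ coefficientwise and apply the triangle inequality. Here $T\in\End_R(M)$ is a matrix-valued Laurent polynomial, so only finitely many $T_m\in\Mat(r,\ZZ)$ are nonzero. The norm $\|T_m\|$ is understood as the operator norm of $T_m$ acting on $\RR^r$ with the fixed norm $\|\cdot\|$; any norm dominating the operator norm works equally well.

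First I write $\bq=\sum_n \bq_n u^n$. Multiplying out gives
\begin{align}
T\bq=\sum_m\sum_n T_m\bq_n u^{m+n},
\end{align}
so the coefficient of $u^j$ is
\begin{align}
(T\bq)_j=\sum_m T_m\bq_{j-m}.
\end{align}
This is a finite sum, since only finitely many $T_m$ are nonzero.

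Next I bound a single coefficient. By the triangle inequality and the definition of the operator norm,
\begin{align}
\|(T\bq)_j\|\le\sum_m\|T_m\|\,\|\bq_{j-m}\|\le\|\bq\|_\infty\sum_m\|T_m\|.
\end{align}
The second inequality uses $\|\bq_{j-m}\|\le\|\bq\|_\infty$, which holds for every index because $\|\bq\|_\infty$ is the supremum over all $n$. The right-hand side does not depend on $j$, so taking the supremum over $j\in\ZZ$ yields the claimed inequality.

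There is no genuine obstacle here. The only point needing care is to fix the convention that $\|T_m\|$ denotes the operator norm induced by the chosen norm on $\RR^r$, so that $\|T_m x\|\le\|T_m\|\,\|x\|$ holds for every $x\in\RR^r$.
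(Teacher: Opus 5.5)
Your proof is correct and is the coefficientwise argument the paper has in mind (the paper simply writes ``Straightforward''): expand $(T\bq)_j=\sum_m T_m\bq_{j-m}$, apply the triangle inequality with $\|T_m\|$ the induced operator norm, and take the supremum over $j$. Your remark fixing the operator-norm convention for $\|T_m\|$ makes explicit what the paper leaves implicit.
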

\begin{proof}
    Straightforward.
\end{proof}

In what follows it will be useful to regard
$L\in \End_R(M)$ as an endomorphism of the vector space $V=M\otimes K\simeq K^r$ over the field $K=\QQ(u)$ (the field of fractions of $R=\ZZ[u^{\pm 1}])$. The field $K$ is not algebraically closed, so the Jordan normal form theorem does not apply to operators on $V$. Its place is taken by the primary decomposition theorem (see e.g. \cite{DummitFoote}).
\begin{lemma}\label{lem:primary-proj}
Let
\begin{align}
  \chi(u,t)=\det(t I-L(u))\in R[t]
\end{align}
and write, in $K[t]$,
\begin{align}
  \chi(u,t)=\prod_{i=1}^s \chi_i(u,t)^{e_i}
\end{align}
where the $\chi_i(u,t)\in R[t]$ are distinct monic polynomials which are irreducible in $K[t]$.
Define
\begin{align}
  V_i := \ker\!\left(\chi_i(u,L)^{e_i}:V\to V\right)\subseteq V.
\end{align}
Then:

\smallskip
\noindent\textup{(1)} $V=\bigoplus_{i=1}^s V_i$.

\smallskip
\noindent\textup{(2)} There exist $K$-linear $L$-equivariant projections $\pi_i\in\End_K(V)$ such that
\begin{align}
  \pi_i^2=\pi_i,\qquad \pi_i\pi_j=0\ (i\neq j),\qquad \sum_{i=1}^s \pi_i=\mathrm{Id}_V,\qquad \pi_i(V)=V_i,
\end{align}
and each $\pi_i$ commutes with $L$.

\smallskip
\noindent\textup{(3)} For each $i$ there exists $0\neq d_i(u)\in R$ such that
\begin{align}
  T_i := d_i(u)\,\pi_i
\end{align}
satisfies $T_i(M)\subseteq M$. In particular, for $\bq\in M$ the element
\begin{align}
  w_i:=T_i(\bq)\in M\cap V_i,
\end{align}
and $w_i=0$ if and only if $\pi_i(\bq)=0$ in $V$.

\end{lemma}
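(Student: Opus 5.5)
The plan is to reduce everything to the standard primary decomposition over the field $K$, using the fact that $R=\ZZ[u^{\pm1}]$ is a UFD with fraction field $K$.

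\emph{Factorization.} First I will justify that the irreducible factors can be taken in $R[t]$. Since $\chi(u,t)$ is monic in $t$ with coefficients in $R$, Gauss's lemma for the UFD $R$ says that any factorization of $\chi$ in $K[t]$ into monic irreducibles has all factors in $R[t]$. Grouping equal factors gives the stated $\chi=\prod_i\chi_i^{e_i}$ with distinct monic $\chi_i\in R[t]$, irreducible in $K[t]$.

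\emph{Parts (1) and (2).} Put $h_i:=\prod_{j\neq i}\chi_j^{e_j}$. The $h_i$ have no common factor in the PID $K[t]$, so there are $a_i\in K[t]$ with $\sum_i a_ih_i=1$. Define $\pi_i:=a_i(L)h_i(L)$. These are polynomials in $L$, hence $K$-linear and commuting with $L$, and $\sum_i\pi_i=\mathrm{Id}_V$. By Cayley--Hamilton $\chi(u,L)=0$. For $i\neq j$, the product $h_ih_j$ is divisible by $\chi$, so $\pi_i\pi_j=0$. Multiplying $\sum_j\pi_j=\mathrm{Id}$ by $\pi_i$ then gives $\pi_i^2=\pi_i$. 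Since $\chi_i^{e_i}h_i=\chi$, we get $\pi_i(V)\subseteq V_i$.

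Conversely, take $x\in V_i$. For $j\neq i$, $h_j$ contains the factor $\chi_i^{e_i}$, so $\pi_jx=0$, and hence $x=\pi_ix$. Therefore $\pi_i(V)=V_i$. The decomposition $V=\bigoplus_i V_i$ follows from the orthogonal idempotent relations.

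\emph{Part (3).} The coefficients of $a_i$ lie in $K$, so clearing denominators gives a nonzero $d_i\in R$ with $d_ia_i\in R[t]$. Then $T_i=d_i\pi_i=(d_ia_i)(L)\,h_i(L)$ is a polynomial in $L$ with coefficients in $R$, and since $L\in\Mat(r,R)$ it maps $M=R^r$ into itself. For $\bq\in M$ we therefore have $w_i=T_i\bq\in M$, and $w_i\in V_i$ because $T_i(V)=\pi_i(V)$ up to a scalar. Finally, $d_i\neq0$ is invertible in $K$, and $M\to V$ is injective because $M$ is free and hence torsion-free. So $w_i=0$ if and only if $\pi_i\bq=0$.

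There is no real obstacle here. The only point that needs care is part (3): working over $R$ rather than $K$ forces the use of Gauss's lemma and the clearing of denominators.
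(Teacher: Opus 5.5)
Your proposal is correct and follows essentially the same route as the paper: you treat $V$ as a torsion $K[t]$-module via Cayley--Hamilton, build primary projections from Chinese-remainder idempotents, and then clear denominators. The differences are cosmetic: you construct $\pi_i=a_i(L)h_i(L)$ explicitly via B\'{e}zout instead of citing the theorem, you clear denominators in the coefficients of $a_i$ rather than in the matrix entries of $\pi_i$, and you add a Gauss's-lemma argument that $\chi_i\in R[t]$, which the paper simply builds into the statement.
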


\begin{proof}
View $V$ as a $K[t]$-module by letting $t$ act as $L$.
Cayley--Hamilton gives $\chi(u,L)=0$ on $V$, hence $V$ is a finitely generated torsion $K[t]$-module.
Since $K[t]$ is a PID, the primary decomposition theorem for modules over a PID yields the decomposition
$V=\bigoplus_i \ker(\chi_i(u,L)^{e_i})$ and the corresponding $L$-equivariant idempotent projections arising from the Chinese remainder theorem in $K[t]/(\chi)$.

For (3), identify $V\cong K^r$ and $M\cong R^r$ with $M$ embedded in $V$ by $m\mapsto m\otimes 1$.
Each $\pi_i$ is represented by a matrix in $\Mat(r,K)$, so choosing a nonzero common denominator $d_i(u)\in R$ for its entries yields
$T_i=d_i\pi_i\in\Mat(r,R)$ and hence $T_i(M)\subseteq M$.
\end{proof}

Next we recall a few facts about Mahler measures, see e.g. \cite{Schmidtdynamical}. The multiplicative Mahler measure ${\mathfrak M}(p)$ of a one-variable complex polynomial $p(z)=a_0 (z-\alpha_1)\ldots(z-\alpha_d)$ is defined as ${\mathfrak M}(p)=|a_0|\prod_i {\rm max}(1,|\alpha_i|)$. By Jensen's formula \cite{Ahlfors}, ${\mathfrak M}(p)=\exp(\mm(p))$, where
\begin{align}\label{eq:aMahler}
\mm(p)=\frac{1}{2\pi}\int_0^{2\pi}\log|p\left(e^{i\theta}\right)|d\theta.
\end{align}
In what follows it will be more convenient to work with $\mm(p)$. We will refer to it as the additive Mahler measure since  $\mm(pp')=\mm(p)+\mm(p')$ for any two polynomials $p,p'$. 

If $p$ has integral coefficients, then obviously ${\mathfrak M}(p)\geq 1$ and thus $\mm(p)\geq 0$. A theorem of Kronecker implies that $\mm(p)=0$ if and only if $p$ is a product of cyclotomic polynomials times a power of $z$. This generalizes straightforwardly to Laurent polynomials with integer coefficients.

Even more generally, one can define the additive Mahler measure of an $\ell$-variable Laurent polynomial $p(z_1,\ldots,z_\ell)$ to be 
\begin{align}
\mm(p)=\int_{\substack{|z_1|=1\\ \ldots\\ |z_\ell|=1}}\log|p(z_1,\ldots,z_\ell)|\frac{dz_1\ldots dz_\ell}{(2\pi i)^\ell z_1\ldots z_\ell}.
\end{align}
If $p$ has integral coefficients, then one can show that  $\mm(p)\geq 0$. Furthermore, Boyd \cite{Boyd} characterized integer-coefficient Laurent polynomials $p$ such that $\mm(p)=0$. Namely, $\mm(p)=0$ if and only if 
\begin{align}
p(z_1,\ldots,z_\ell)=z_1^{A_1}\ldots z_\ell^{A_\ell}\prod_{i=1}^N \Phi_{n_i}(z_1^{p_{i,1}}\ldots z_\ell^{p_{i,\ell}}),
\end{align}
where $A_1,\ldots,A_\ell,p_{i,1},\ldots,p_{i,\ell}$, $i\in \{1,\ldots,N\},$ are integers and $\Phi_n(z)$, $n\geq 2$, is the $n^{\rm th}$ cyclotomic polynomial, i.e. the unique monic polynomial of a variable $z$ whose roots are exactly the primitive $n^{\rm th}$ roots of unity. The following lemma can be regarded as a converse to Boyd's theorem in the two-variable case.

\begin{lemma}\label{lem:mm-expanding-root}
Let $p(u,t)\in K[t]$, $K=\QQ(u)$, be an irreducible monic polynomial in $t$ of degree $d$. Suppose the two-variable additive Mahler measure of $p$ is positive,  $\mm(p)>0$.
Then there exist $\varepsilon>0$, $\theta_0\in[0,2\pi)$, and an open interval $I\ni\theta_0$ such that
for each $\theta\in I$ the one-variable polynomial $p(e^{ i\theta},t)$ has a \emph{simple} root
$\lambda(\theta)$ depending real-analytically on $\theta$ and satisfying
\begin{align}
|\lambda(\theta)|\ge 1+\varepsilon,\quad \forall\theta\in I.
\end{align}
\end{lemma}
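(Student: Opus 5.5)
The plan is to reduce the two-variable Mahler measure to an average of one-variable Mahler measures and then use the fact that, away from finitely many bad points, the roots of $p(e^{i\theta},t)$ are simple and vary real-analytically.

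\textbf{Step 1: reduce to a one-variable average.} For fixed $u=e^{i\theta}$, the polynomial $p(u,t)$ is monic in $t$. (I assume the coefficients lie in $R$, as for the $\chi_i$ in Lemma \ref{lem:primary-proj}; otherwise first clear denominators, which only changes $\mm$ by the Mahler measure of a one-variable polynomial in $u$.) Factor
\begin{align}
p(e^{i\theta},t)=\prod_{j=1}^d\bigl(t-\lambda_j(\theta)\bigr).
\end{align}
Applying Jensen's formula in the variable $t$ gives
\begin{align}
\mm(p)=\frac{1}{2\pi}\int_0^{2\pi}\sum_{j=1}^d \log^+|\lambda_j(\theta)|\,d\theta ,
\end{align}
where $\log^+x=\max(0,\log x)$.

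\textbf{Step 2: isolate the bad points.} Since $p$ is irreducible in $K[t]$ and $K$ has characteristic zero, $p$ is separable. Hence its discriminant $\Delta(u)=\mathrm{disc}_t\,p$ is a nonzero element of $K$, and in fact of $R$ in the integral case. Therefore $\Delta(e^{i\theta})=0$ for only finitely many $\theta$. Off this finite set $S$, the roots $\lambda_j(\theta)$ are simple. Because $p$ is monic, the implicit function theorem shows that each root can be chosen locally as a real-analytic function of $\theta$. If $p$ has denominators in $u$, I also add to $S$ the finitely many zeros of those denominators on the circle.

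\textbf{Step 3: find the good arc.} Since $\mm(p)>0$, the integrand $\sum_j\log^+|\lambda_j(\theta)|$ is positive on a set of positive measure. So there is a $\theta_0\notin S$ at which some simple root satisfies $|\lambda_j(\theta_0)|>1$. Choose $\varepsilon>0$ with $|\lambda_j(\theta_0)|>1+2\varepsilon$. On a small open interval $I\ni\theta_0$ with $\bar I\cap S=\emptyset$, the analytic branch $\lambda(\theta)$ extending $\lambda_j(\theta_0)$ remains simple. Shrinking $I$ and using continuity gives $|\lambda(\theta)|\ge 1+\varepsilon$ on $I$, which is the conclusion of the lemma.

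\textbf{Main obstacle.} The main point needing care is justifying the Jensen reduction in Step 1, together with the measure-zero issues at $S$. The integrand $\sum_j\log^+|\lambda_j|$ is continuous in $\theta$, because the roots of a monic polynomial depend continuously on its coefficients as an unordered multiset. The inner integral formula $\frac{1}{2\pi}\int\log|p(e^{i\theta},e^{i\phi})|\,d\phi=\sum_j\log^+|\lambda_j(\theta)|$ is exactly Jensen's formula for a monic polynomial, and Fubini applies because $\log|p|$ is integrable on the torus. Irreducibility is used only to guarantee $\Delta\not\equiv0$. Without it, a repeated factor could make every root multiple, and simplicity would fail.
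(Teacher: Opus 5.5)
Your proposal is correct and follows the paper's proof essentially step for step. Both arguments use Jensen's formula in $t$ to write $\mm(p)$ as the average of $\sum_j\log^+|\lambda_j(\theta)|$. Both then use irreducibility to get a nonzero $t$-discriminant, which excludes finitely many $\theta$. Finally, both apply the implicit function theorem at a $\theta_0$ with a simple root of modulus exceeding $1$ and shrink the interval.

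The differences are cosmetic. You pick $\theta_0$ before $\varepsilon$, and you spell out the Fubini/integrability and denominator bookkeeping that the paper leaves implicit.
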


\begin{proof}
For fixed $\theta$, let $\lambda_1(\theta),\dots,\lambda_d(\theta)$ be the roots of $p(e^{ i\theta},t)$.
By Jensen's formula applied in the $t$-variable and the fact that $p$ is monic in $t$,
\begin{align}
   \mm(p)=\frac{1}{2\pi}\int_0^{2\pi} \sum_{j=1}^d \log^+|\lambda_j(\theta)|\,d\theta, 
\end{align}
where $\log^+x=\max(\log x,0).$
Since $\mm(p)>0$, there exist $\varepsilon>0$ and a set $E$ of positive measure such that
$\max_j|\lambda_j(\theta)|>1+2\varepsilon$ for all $\theta\in E$. Since $p(u,t)$ is irreducible, its discriminant with respect to $t$ is a nonzero element of $K=\QQ(u)$, hence can vanish only for a finitely many values of $\theta$. We choose 
$\theta_0\in E$ such that the discriminant of $p(e^{i\theta},t)$ is nonzero.
Then all roots of $p(e^{ i\theta_0},t)$ are simple, and at least one root $\lambda(\theta_0)$ satisfies
$|\lambda(\theta_0)|>1+2\varepsilon$.
By the implicit function theorem, this root extends to a real-analytic root $\lambda(\theta)$
on a neighborhood $I$ of $\theta_0$; shrinking $I$ if necessary gives $|\lambda(\theta)|\ge 1+\varepsilon$ on $I$.
\end{proof}

The next lemma is standard.
\begin{lemma}\label{lem:jordan-growth}
Let $A\in \Mat(r,\CC)$ and let $\lambda$ be an eigenvalue of $A$ with $|\lambda|>1$.
If $v\in\CC^r$ belongs to the generalized eigenspace of $\lambda$, then there exist
$m\ge 1$ and $c>0$ such that for all $k\ge 1$,
\begin{align}
\|A^k v\|\ \ge\ c\,k^{m-1}|\lambda|^k.
\end{align}
\end{lemma}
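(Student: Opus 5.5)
The plan is to reduce to a single Jordan chain and read off the growth from the top vector of that chain. Put $N:=A-\lambda I$ and let $m\ge 1$ be the smallest integer with $N^m v=0$; this exists because $v$ lies in the generalized eigenspace of $\lambda$. If $v=0$ the statement is vacuous (or we exclude it), so assume $v\neq 0$, which gives $m\ge 1$. The vectors $v, Nv,\ldots,N^{m-1}v$ are linearly independent: apply $N^{m-1}$ to a hypothetical relation and then induct on the lowest-index term.

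Since $A=\lambda I+N$ and $N$ commutes with $\lambda I$, the binomial expansion terminates:
\begin{align}
A^k v=\sum_{j=0}^{m-1}\binom{k}{j}\lambda^{k-j}N^j v .
\end{align}
Fix the norm $\|x\|'=\max_j |c_j|$ for $x=\sum_j c_j N^j v$ on the span $W$ of the chain. Any two norms on the finite-dimensional space $W$ are equivalent, so there is a constant $c_0>0$ with $\|x\|\ge c_0\|x\|'$ for all $x\in W$. Keeping only the top coefficient, $j=m-1$, gives
\begin{align}
\|A^k v\|\ge c_0\binom{k}{m-1}|\lambda|^{k-m+1}.
\end{align}

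For $k\ge m-1$ we have $\binom{k}{m-1}\ge (k/(m-1))^{m-1}/\ldots$; more cleanly, $\binom{k}{m-1}\ge k^{m-1}/(m-1)^{m-1}$ when $k\ge m-1$, which follows from $(k-i)/(m-1-i)\ge k/(m-1)$. This yields $\|A^kv\|\ge c_1 k^{m-1}|\lambda|^k$ with $c_1=c_0(m-1)^{-(m-1)}|\lambda|^{-(m-1)}$.

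That leaves the finitely many $k$ with $1\le k<m-1$. For these, $A^kv\neq 0$, because $A$ is invertible on $W$: its only eigenvalue there is $\lambda\neq 0$. So the ratio $\|A^kv\|/(k^{m-1}|\lambda|^k)$ is positive for each such $k$, and taking $c$ to be the minimum of $c_1$ and these finitely many ratios gives the bound for all $k\ge 1$.

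I do not expect a real obstacle here. The only points needing care are that the coefficient comparison must be made in a norm adapted to the chain, which norm equivalence handles, and that the constant must be uniform in $k$, which the finite-exception argument handles. Note that the hypothesis $|\lambda|>1$ is not actually used beyond $\lambda\neq 0$; it only matters for how the lemma is applied.
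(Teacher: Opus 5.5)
Your proof is correct and follows the same route as the paper. Both expand $(\lambda I+N)^k$ binomially and extract the top term $\binom{k}{m-1}\lambda^{k-m+1}N^{m-1}v$. By working on the cyclic span of $v,Nv,\dots,N^{m-1}v$ instead of a full Jordan form, and adding the norm-equivalence step and the finite-$k$ patch, you make rigorous the lower bound that the paper's sketch states only loosely as $O(k^{m-1}|\lambda|^k)$.

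One correction: the case $v=0$ is not vacuous but false, so $v\neq 0$ must be an explicit hypothesis, as it is in every application in the paper.
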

\begin{proof}
Put $A$ in Jordan normal form. On a Jordan block $J=\lambda I+N$ of size $\ell$,
\begin{align}
J^k=\lambda^k\sum_{j=0}^{\ell-1}\binom{k}{j}\lambda^{-j}N^j.
\end{align}
If $m$ is the largest integer such that $v$ has a component with $N^{m-1}v\neq 0$, then $\|J^k v\|=O( k^{m-1}|\lambda|^k)$.
Summing over blocks gives the stated estimate.
\end{proof}

\begin{proof}[Proof of Theorem \ref{thm:no-resonance}]
First suppose $\CF$ has the FB property. If $\CF$ were irregular, then by Lemma~\ref{lem:algebraicregularity}
there would exist $k\in\NN$, $m\in\ZZ$, and $0\neq \bq_0\in M$ such that
\begin{equation}\label{eq:resonant-q0}
L^k\bq_0=u^m\bq_0.
\end{equation}
Write $n=qk+r$ with $0\le r<k$. Using $R$-linearity of $L$ we get
\begin{align}
L^n\bq_0=L^{qk+r}\bq_0=L^r(L^{qk}\bq_0)=L^r(u^{mq}\bq_0)=u^{mq}L^r\bq_0.
\end{align}
Multiplication by $u^{mq}$ shifts coefficients and hence preserves $\|\cdot\|_\infty$, so
\begin{align}
\|L^n\bq_0\|_\infty=\|L^r\bq_0\|_\infty\le \max_{0\le r<k}\|L^r\bq_0\|_\infty<\infty.
\end{align}
In particular $\|L^n\bq_0\|_\infty$ does not tend to $\infty$, contradicting the FB property.
Thus $\CF$ must be regular.

\smallskip

Conversely, we prove the contrapositive: assume $\CF$ does \emph{not} have the FB property.
Then there exists $0\neq \bq\in M$ such that $\|L^k\bq\|_\infty\nrightarrow\infty$.
Hence there exist $B<\infty$ and a strictly increasing sequence $k_j\to\infty$ with
\begin{equation}\label{eq:subseq-bdd-q}
\|L^{k_j}\bq\|_\infty\le B\qquad\forall j.
\end{equation}

Let $K=\QQ(u)$, $V:=M\otimes_R K\cong K^r$, and let
\begin{align}
\chi(u,t)=\det(tI-L(u))\in R[t].
\end{align}
Factor $\chi$ in $K[t]$ as in Lemma~\ref{lem:primary-proj}:
\begin{align}
\chi(u,t)=\prod_{i=1}^s \chi_i(u,t)^{e_i},
\qquad
V_i:=\ker(\chi_i(u,L)^{e_i}:V\to V),
\qquad
V=\bigoplus_{i=1}^s V_i.
\end{align}
Let $\pi_i$ and $T_i=d_i(u)\pi_i$ be the $L$-equivariant maps from Lemma~\ref{lem:primary-proj}(2)--(3).
Define
\begin{align}
w_i:=T_i(\bq)\in M\cap V_i.
\end{align}
Since $\sum_i\pi_i=\mathrm{Id}_V$ and $\bq\neq 0$ in $V$, we have $w_i\neq 0$ for at least one index $i$.

Moreover, $T_i$ commutes with $L$, and by Lemma~\ref{lem:norm} there exists a constant $C_i>0$ (depending only on $T_i$)
such that for all $j$,
\begin{equation}\label{eq:subseq-bdd-wi}
\|L^{k_j}w_i\|_\infty
=\|T_i(L^{k_j}\bq)\|_\infty
\le C_i\,\|L^{k_j}\bq\|_\infty
\le C_i B.
\end{equation}

\smallskip
\noindent\textit{Claim:} If $\mm(\chi_i)>0$, then $w_i=0$.

\smallskip
\noindent
Assume $\mm(\chi_i)>0$ and $w_i\neq 0$. Set $A(\theta):=L(e^{i\theta})$ and $\widehat{w_i}(\theta):=w_i(e^{i\theta})$.
Let $D\ge 0$ be such that every entry of $L(u)$ is supported in degrees contained in $[-D,D]$.
If the support of $w_i$ is contained in some integer interval of length $S_0$, then $L(u)^k w_i$ is supported in an interval
of length at most $S_0+2Dk$. Therefore for every $\theta$ and every $j$,
\begin{align}\label{eq:polynomial-bound}
\|A(\theta)^{k_j}\widehat{w_i}(\theta)\|
=\|(L^{k_j}w_i)(e^{i\theta})\|
\le (S_0+2Dk_j+1)\,\|L^{k_j}w_i\|_\infty
\le C'_i(1+k_j),
\end{align}
for a constant $C'_i>0$ (using \eqref{eq:subseq-bdd-wi}).
Thus we have a polynomial upper bound along the subsequence $k_j$.

On the other hand, since $\chi_i$ is irreducible and monic in $t$ and $\mm(\chi_i)>0$,
Lemma~\ref{lem:mm-expanding-root} gives an open interval $I$ and a real-analytic simple root
$\lambda(\theta)$ of $\chi_i(e^{i\theta},t)$ on $I$ with $|\lambda(\theta)|\ge 1+\varepsilon$.

Pick $\theta_0\in I$ and put $u_0:=e^{i\theta_0}$.
Shrinking $I$ if needed, choose a simply connected neighborhood $U\subset\CC^\times$ of $u_0$ such that
$\Delta_i(u)\neq 0$ on $U$, where $\Delta_i$ is the $t$-discriminant of $\chi_i(u,t)$.
Then $\chi_i(u,t)$ has $d=\deg_t \chi_i$ holomorphic root functions
$\Lambda_1(u),\dots,\Lambda_d(u)$ on $U$ with
\begin{align}
\chi_i(u,t)=\prod_{j=1}^d (t-\Lambda_j(u))\quad\text{in }\mathcal O(U)[t],
\end{align}
and (after relabeling) $\Lambda_1(e^{ i\theta})=\lambda(\theta)$ for $\theta\in I$.

Let $F/K$ be a splitting field of $\chi_i$.
The field $K(\Lambda_1,\dots,\Lambda_d)\subset \mathcal M(U)$ (meromorphic functions on $U$) is also a splitting field,
hence $K$-isomorphic to $F$; fix such an isomorphism
\begin{align}
\iota:F \xrightarrow{\ \cong\ } K(\Lambda_1,\dots,\Lambda_d)\subset \mathcal M(U).
\end{align}
For $a\in F$ we view $\iota(a)$ as a meromorphic function on $U$, and for $\theta\in I$ we define
\begin{align}
a(\theta):=\iota(a)\left(e^{i\theta}\right),
\end{align}
shrinking $I$ if necessary to avoid poles for the finitely many coefficients used below.

Let $m:=e_i$.
In $F[t]$ we have $\chi_i(u,t)^m=\prod_{j=1}^d (t-\lambda_j)^m$ for the abstract roots $\lambda_j\in F$,
ordered so that under $\iota$ the root $\lambda_1$ corresponds to the branch $\Lambda_1$ on $U$.
By the B\'{e}zout identity for the PID $F[t]$, there exists $\epsilon(t)\in F[t]/(\chi_i^m)$ such that
\begin{align}
\epsilon(t)\equiv 1\ \ \mathrm{mod}\ (t-\lambda_1)^m,
\qquad
\epsilon(t)\equiv 0\ \ \mathrm{mod}\ \prod_{j=2}^d (t-\lambda_j)^m.
\end{align}
Choose a polynomial representative $\epsilon(t)=\sum_{j=0}^J \epsilon_j t^j$ with $\epsilon_j\in F$.
For $\theta\in I$ define $\epsilon_\theta(t):=\sum_j \epsilon_j(\theta)t^j\in\CC[t]$ and set
\begin{align}
P(\theta):=\epsilon_\theta\left(A(\theta)\right).
\end{align}
By construction, $P(\theta)$ commutes with $A(\theta)$ and acts as the identity on the generalized eigenspace of $A(\theta)$
for the eigenvalue $\lambda(\theta)$ and as $0$ on the other generalized eigenspaces inside
$\ker(\chi_i(e^{i\theta},A(\theta))^m)$. Let $v(\theta):=P(\theta)\hat w_i(\theta)$. This is real-analytic function on $I$ with values in $\CC^r$ whose value at each $\theta\in I$ lies in the generalized eigenspace of $A(\theta)$ corresponding to the eigenvalue $\lambda(\theta)$.

If $v(\theta_*)\neq 0$ for some $\theta_*\in I$, then Lemma~\ref{lem:jordan-growth} yields constants $c>0$ and $\ell\ge 0$
such that for all $k\ge 1$,
\begin{align}
\|A(\theta_*)^k v(\theta_*)\|\ \ge\ c\,k^\ell |\lambda(\theta_*)|^k.
\end{align}
In particular this holds for $k=k_j$, and since $|\lambda(\theta_*)|>1$ this contradicts the polynomial upper bound along $k_j$,
because $A(\theta_*)^{k_j}v(\theta_*)=P(\theta_*)A(\theta_*)^{k_j}\widehat{w_i}(\theta_*)$ and $P(\theta_*)$ has finite operator norm. Hence $v(\theta)=0$ for all $\theta\in I$.

Now consider $z:=\epsilon(L)w_i\in V\otimes_K F\cong F^r$.
Under $\iota$, $z$ becomes a vector of meromorphic functions on $U$, and for $\theta\in I$ its value at
$u=e^{i\theta}$ is exactly $v(\theta)=0$.
Therefore all coordinates of $\iota(z)$ vanish on an arc, hence $\iota(z)=0$ in $\mathcal M(U)^r$ and thus $z=0$ in $F^r$.
So
\begin{align}
\epsilon(L)w_i=0\quad\text{in }V\otimes_K F.
\end{align}

Let $G=\Gal(F/K)$ and fix $\sigma_1,\dots,\sigma_d\in G$ with $\sigma_j(\lambda_1)=\lambda_j$.
Set $\epsilon_j:=\sigma_j(\epsilon)\in F[t]/(\chi_i^m)$.
Then $\epsilon_j$ is the idempotent corresponding to the factor $(t-\lambda_j)^m$, and
$\sum_{j=1}^d \epsilon_j \equiv 1 \pmod{\chi_i^m}$.
Since $L$ and $w$ are defined over $K$, applying $\sigma_j$ to the identity $\epsilon(L)w_i=0$ yields
$\epsilon_j(L)w_i=0$ for all $j$.
Hence
\begin{align}
w_i=\left(\sum_{j=1}^d \epsilon_j(L)\right)w_i=\sum_{j=1}^d \epsilon_j(L)w_i=0,
\end{align}
a contradiction. Thus $\mm(\chi_i)>0$ implies $w_i=0$.

Since there exist $i$ with $w_i\neq 0$, we must have $\mm(\chi_i)=0$ for at least one such index $i$. Fix such an index and denote $p(u,t):=\chi_i(u,t)$.
Then $p(u,t)$ is irreducible in $K[t]$, monic in $t$, and $\mm(p)=0$.

By Boyd's characterization of zero Mahler measure (as recalled above), the irreducible two-variable Laurent polynomial $p$
has the form
\begin{align}
p(u,t)=c\,u^e t^f\,\Phi_d(u^a t^b)
\end{align}
for integers $a,b,e,f$, a cyclotomic polynomial $\Phi_d$, and $c\neq 0$.
Since $\chi(u,0)=\det(-L(u))\neq 0$, no irreducible factor of $\chi$ can be divisible by $t$,
hence $f=0$. Since $p$ has positive $t$-degree, we also have $b\neq 0$.

Let $\lambda\in\overline{\QQ(u)}$ be a root of $p(u,t)$ (here $\overline{\QQ(u)}$ denotes the algebraic closure of $\QQ(u)$). Then $\Phi_d(u^a\lambda^b)=0$, so $(u^a\lambda^b)^d=1$ and hence
$\lambda^{bd}=u^{-ad}$. Setting $k:=|b|d>0$ and $m:=-{\rm sign}(b)ad\in\ZZ$, we obtain $\lambda^k=u^m$.
Since $p\mid\chi$, the number $\lambda$ is an eigenvalue of $L(u)$, so $u^m$ is an eigenvalue of $L(u)^k$, and therefore
\begin{align}
\det\left(L(u)^k-u^m I\right)=0.
\end{align}
Thus there exists $0\neq v\in K^r$ with $(L^k-u^m I)v=0$.
Clearing denominators yields $0\neq \bq_0\in M$ satisfying $L^k\bq_0=u^m\bq_0$.
By Lemma~\ref{lem:algebraicregularity}, this implies that $\CF$ is irregular.

\end{proof}

\section{Quantum versus classical Floquet spin chains}\label{sec:discussion}

The classical Floquet systems studied in this paper bear similarity to translationally-invariant (TI) Clifford QCAs acting on a quantum spin chain. Such a quantum spin chain has an on-site Hilbert space of dimension $p^s$. It is a tensor product of $s$ copies of an irreducible $p$-dimensional unitary representation of the generalized Pauli group with generators $X,Z$ satisfying 
\begin{align}
    X^p=Z^p=1,\quad XZ=e^{2i\pi/p}ZX.
\end{align}
Monomials in the generalized Pauli matrices $X^a_n,Z^a_n$, $a\in\{1,\ldots,s\}$, $n\in\ZZ$ form a basis for the algebra of local observables of the spin chain. A TI Clifford QCA is a translationally-invariant QCA which maps this distinguished basis to itself, up to scalar factors. One can show that such QCAs can be described by pseudo-unitary elements of $GL(2s,R_p)$, where $R_p$ is the ring of Laurent polynomials with coefficients in $\ZZ_p$. Given an algebraic Floquet system described by $L\in GL(2s,R)$ one can produce a TI Clifford QCA by letting $L_p=L\bmod p$. 

This map from the set of algebraic Floquet systems to the set of TI Clifford QCAs has a simple physical meaning.  Recall that quantizing a $2s$-dimensional torus equipped with a symplectic structure $\frac{p}{2\pi} \sum_{a=1}^s d\phi^{2a-1}\wedge d\phi^{2a}$ gives a quantum torus whose algebra of observables has dimension $p^{2s}$. Accordingly, quantization of $\A$ gives a quantum spin chain whose on-site algebra of observables is the complex matrix algebra of dimension $p^{2s}$. While quantization is not a functor, in general, in this particular case there is a natural way to quantize the automorphism $\CF$ of the form (\ref{eq:F}). One can show that the corresponding $*$-automorphism of the quantum spin chain is the TI QCA specified by $L_p(u)=L(u) {\bmod p}$. 

It is natural to ask how the ergodic properties of the symplectomorphism $\CF$ are related to those of the corresponding TI Clifford QCA. The task is somewhat simplified by the fact that for systems of this type ergodicity and mixing are identical. In the classical case this is a well-known result \cite{Rokhlin1949,Schmidtdynamical}, for the quantum case see \cite{KapRad}. One can show that the TI Clifford QCA corresponding to a pseudo-unitary $L_p\in GL(2s,R_p)$ is ergodic/mixing iff the equation $L_p^k\bq=\bq$ does not admit nontrivial solutions for all $k\in\NN$. 

It is easy to see that classical ergodicity does not imply quantum ergodicity. It is equally easy to see that quantum ergodicity even for a single value of $p$ implies ergodicity for the ``parent'' classical problem. 

The situation with thermalization is more complicated. Regularity of a TI QCA $\alpha$ can be defined as the absence of nontrivial local observables $a$ satisfying $\alpha^k(a)=\tau^m(a)$ where $\tau$ is the translation automorphism. One can show that a TI Clifford QCA is regular iff the equation $(L^k-u^m)\bq=0$, $\bq\in R_p^{2s}$, does not have nonzero solutions. Hence if a regular TI Clifford QCA $\alpha$ is a quantization of an algebraic Floquet symplectomorphism $\CF$, then $\CF$ is also regular. However, the relation between regularity and thermalization on the quantum level is not as straightforward as on the classical level. First of all, the conditions on quantum initial states which ensure thermalization are  rather restrictive, see \cite{KapRad} for details. In particular, it is not known if regular TI Clifford QCAs thermalize quantum Gibbs states. Second, the quantum thermalization statement itself is weaker: at present, it is only proved that expectation values approach their infinite-temperature values after one excludes a zero-density subset of times \cite{KapRad}. 

The reason for this is that the mechanism for thermalization in the quantum case is very different from the classical one. Since the single-site Hilbert space is finite-dimensional, frequency blowup is impossible in the quantum case. Instead, quantum thermalization relies on ``diffusion'': the unbounded growth of support of all local observables \cite{PV1,ETDS,KapRad}. This is a much more delicate property than frequency blowup. Despite this difference, both in the quantum and classical cases the physical intuition correctly predicts the circumstances in which the dynamical law is thermalizing.

\bibliographystyle{unsrt}
\bibliography{Bibliography.bib}

\end{document}